\documentclass[lettersize,onecolumn]{IEEEtran}
\usepackage{amsmath,amsfonts,amsthm,amssymb}
\usepackage{algorithm}
\usepackage{hyperref}
\usepackage{algpseudocode}
\usepackage{array}
\usepackage[caption=false,font=normalsize,labelfont=sf,textfont=sf]{subfig}
\usepackage{textcomp}
\usepackage{url}
\usepackage{verbatim}
\usepackage{graphicx}
\usepackage{cite}
\usepackage{wrapfig}
\usepackage{dsfont}
\usepackage{tikz}
\usepackage{booktabs}
\usetikzlibrary{arrows.meta,calc,positioning}
\newtheorem{theorem}{Theorem}[section]

\newtheorem{lemma}[theorem]{Lemma}

\newif\ifCOM
\COMfalse

\newcommand\eqdef{\overset{\mbox{\tiny def}}{=}}

\begin{document}

\title{New lower bounds for kissing numbers in dimensions $25$--$31$}

\author{Rustem~Takhanov and Stanislav~Yun
\thanks{R.~Takhanov and S.~Yun are with the Mathematics Department, Nazarbayev University, and Nazarbayev University Research Administration, Astana, Kazakhstan e-mail: rustem.takhanov@nu.edu.kz.}
}



\maketitle
\begin{abstract}
The kissing number in dimension $d$ is the largest number of non-overlapping congruent spheres that can simultaneously touch a central sphere of the same size. We study dimensions $25$-$31$, where the best previous constructions are based on Leech lifting from the optimal kissing configuration in dimension $24$. Our method exploits the absence of contacts between the unlifted bulk and the block consisting of lifted and auxiliary vectors. Rotating this block while keeping the bulk fixed creates room for two antipodal points in dimensions $26$, $27$, and $28$, and one point in dimension $29$. 
 Three further modifications yield improvements in dimensions $25$, $30$ and $31$: (a) a nonorthogonal diagonal linear deformation of the lifted block admits two antipodal points in dimension $25$; (b) rotating the additional coordinates of the lifted vectors and then applying a small orthogonal transformation to the resulting lifted block as a whole admits two antipodal points in dimension $30$; (c) rotating only the additional coordinates of the lifted vectors admits four nonantipodal points in dimension $31$. Together, these constructions yield the new lower bounds
$\tau_{25}\geq 197058$, $\tau_{26}\geq 198552$,
$\tau_{27}\geq 200046$, $\tau_{28}\geq 204522$,
$\tau_{29}\geq 209497$, $\tau_{30}\ge 220442$, and $\tau_{31}\geq 238354$.
\end{abstract}

\begin{IEEEkeywords}
Kissing number, Leech lifting, Leech lattice, optimization.
\end{IEEEkeywords}

\section{Introduction}
\label{sec:introduction}

A kissing configuration in dimension $d$ is a finite set $C\subset {\mathbb S}^{d-1}$ such that $u^\top v\leq \frac12$ for every pair of distinct points $u,v\in C$.
The kissing number $\tau_d$ is the largest possible cardinality of such a
configuration. Equivalently, it is the largest number of non-overlapping
unit spheres that can simultaneously touch a central unit sphere.
Every explicit kissing configuration therefore gives a lower bound
$\tau_d\geq |C|$.

In dimensions $d\leq 8$, the best known lower bounds are attained by
configurations already known in the nineteenth century or earlier;
the classical root systems provide examples of all these
sizes~\cite{Schlafli1901,KorkineZolotareff1873}.
Their persistence does not mean that the corresponding optimality
questions have all been settled: the exact kissing numbers in
dimensions $5$, $6$, and $7$ remain unknown.
Moreover, the configurations attaining these cardinalities need not
be unique. Geometrically distinct examples have continued to emerge,
including the recent five-dimensional constructions of
Sz\"oll\H{o}si~\cite{Szollosi2023} and Cohn--Rajagopal~\cite{CohnRajagopal2026}.

Beyond dimension $8$, improvements have come from a wider range of
constructions. The classical work of Leech and
Sloane~\cite{LeechSloane1971} connected kissing configurations with
error-correcting codes and established several long-standing records.
Zinoviev and Ericson~\cite{ZinovievEricson1999} subsequently obtained
the bound $\tau_{13}\geq 1154$.
More recently, Ganzhinov~\cite{Ganzhinov2025} improved the lower bounds
in dimensions $10$, $11$, and $14$, while Takhanov, Assylbekov, and
Yun~\cite{TakhanovAssylbekovYun2026} constructed an $841$-point
configuration in dimension $12$.  
The latter configuration can be understood as a deformation of an $840$-point kissing arrangement admitting an explicit quaternionic description, with one additional point accommodated by the deformation~\cite{takhanov2026quaternionic}. Earlier, Takhanov and Yun~\cite{takhanov2026classification} classified the maximum independent sets in the signed Johnson graph $J_{\pm}(12,4)$, obtaining $1579$ pairwise non-isometric $840$-point kissing arrangements.
Dimension $11$ has also proved particularly fruitful for searches
using artificial intelligence: AlphaEvolve improved Ganzhinov's
$592$-point bound to $593$~\cite{AlphaEvolve2025}, and collaborative
agents on EinsteinArena subsequently reached
$604$~\cite{EinsteinArena2026}.

At dimension $16$, the $4320$ minimal vectors of the Barnes--Wall
lattice remain unsurpassed~\cite{BarnesWall1959}.
There is nevertheless a distinct \emph{odd} $16$-dimensional kissing
configuration of the same size.
This alternative is useful in higher dimensions: Cohn and
Li~\cite{CohnLi2024} improved the lower bounds in every dimension
from $17$ through $21$.
Ho~\cite{Ho2026} then improved their dimension-$19$ bound to $11948$.
The records in dimensions $22$ and $23$ still come from Leech's
constructions~\cite{Leech1967}. In dimension $24$, the normalized minimal vectors of
the Leech lattice form a kissing configuration of size $196560$,
which attains the exact kissing
number~\cite{Leech1967,Levenshtein1979,OdlyzkoSloane1979}.

This paper concerns dimensions $d=24+k$, where $1\leq k\leq 7$.
The strongest known constructions in this range remain closely tied
to the Leech lattice. Starting from its normalized minimal shell
in $\mathbb R^{24}\oplus\{0\}$, one replaces selected vectors by
lifted vectors in $\mathbb R^{24}\oplus\mathbb R^k$ (one-to-many) and may add an
auxiliary kissing configuration supported entirely in the second
factor. This lifting construction was introduced by Cohn, Jiao,
Kumar, and Torquato~\cite{CJKT2011}.
It allows considerable choice in the Leech subsets selected for
lifting, the directions used to lift them, and the auxiliary
configuration, subject to the required inner-product conditions.
Kallal, Kan, and Wang~\cite{KallalKanWang2017} improved the bounds in
dimensions $25$--$31$ by finding larger suitable Leech subsets and
improving the selection of disjoint subsets.
More recently, the reinforcement learning system PackingStar
produced the records that serve as our starting points, using
$496$-point Leech subcodes and refined decompositions of the
lower-dimensional configurations~\cite{PackingStar}.

Our approach exploits a simple geometric feature of these
constructions. In this construction, the set of vectors can be decomposed as
\[
  C_d=B_d\sqcup M_d,
\]
where $B_d$ is the unlifted bulk and $M_d$ consists of all lifted
vectors together with the auxiliary configuration.
There are no contacts between the two blocks: every inner product
between a bulk vector and a lifted vector is at most
$1/\sqrt6<1/2$, while the auxiliary vectors are orthogonal to the
bulk. Consequently, the whole block $M_d$ can undergo sufficiently
small rotations relative to $B_d$ without violating the kissing
condition. Such rotations preserve all inner products within
$M_d$, while changing the space available for additional points.
We use this freedom to search jointly for a rotation
$Q\in\operatorname{SO}(d)$ and a new unit vector $x$, through the
constrained minimax problem
\begin{equation}
  \min_{\substack{x\in{\mathbb S}^{d-1},\ Q\in\operatorname{SO}(d)\\
                  b^\top Qm\leq 1/2\quad(b\in B_d,\ m\in M_d)}}
       \ \max_{y\in B_d\cup QM_d} x^\top y.
  \label{eq:introduction-minimax}
\end{equation}
Any feasible pair with objective value at most $1/2$ supplies an
additional point. We then check whether its antipode can also be
included.

Applying this procedure to the PackingStar configurations yields
the new lower bounds
\begin{equation}
  \tau_{26}\geq 198552,\qquad
  \tau_{27}\geq 200046,\qquad
  \tau_{28}\geq 204522,\qquad
  \tau_{29}\geq 209497.
  \label{eq:introduction-new-bounds}
\end{equation}
The improvements consist of two antipodal points in each of
dimensions $26$, $27$, and $28$, and one point in dimension $29$.
The resulting configurations are verified with explicit error
bounds accounting for the numerical representation of their
coordinates and rotations.

Three other modifications of Leech lifting yield
\begin{equation}
  \tau_{25}\geq 197058,\qquad\tau_{30}\geq 220442,\qquad
  \tau_{31}\geq 238354.
  \label{eq:introduction-additional-bounds}
\end{equation}
In dimension $25$, we apply a nonorthogonal diagonal linear
transformation to the lifted block, moving its vectors closer
to the ``equator''
$({\mathbb R}^{24}\times\{0\})\cap{\mathbb S}^{24}$
while preserving their unit norms. This allows us to add the two poles
$\pm e_{25}$ to the arrangement. The validity of this construction
follows from direct inner-product calculations.

In dimension $31$, we apply a common rotation in
$\operatorname{SO}(7)$ to the additional coordinates of all
lifted vectors, keeping both the bulk and the auxiliary block
fixed. This permits the addition of four nonantipodal points. In dimension $30$, we additionally apply a small orthogonal transformation to the resulting lifted block as a whole, allowing the addition of two antipodal points. Constructions are certified
using exact arithmetic and rigorous interval bounds that
account for normalization and numerical representations
of rotations.

\section{Leech lifting and rigid block rotations}
\label{sec:leech-lifting}

We are interested in the case $d=24+k$, where $1\leq k\leq 7$.
Let $\mathcal L_{24}\subset {\mathbb S}^{23}$ denote the normalized minimal vectors of the Leech lattice, of cardinality $196560$.

We use Leech lifting construction from~\cite{CJKT2011}, improved in~\cite{KallalKanWang2017}.  
\begin{lemma}[Leech lifting\cite{CJKT2011,KallalKanWang2017}]\label{leech-lifting} Suppose that
$S_1,\ldots,S_r\subset\mathcal L_{24}$ are pairwise disjoint and satisfy
\begin{equation}
  u^\top v\leq \frac14
  \qquad(u,v\in S_i,\ u\ne v).
  \label{eq:leech-subcodes}
\end{equation}
Let $K_k=T_1\sqcup\cdots\sqcup T_r\subset {\mathbb S}^{k-1}$ be a kissing
configuration whose parts satisfy
\begin{equation}
  t^\top s\leq-\frac12
  \qquad(t,s\in T_i,\ t\ne s).
  \label{eq:auxiliary-parts}
\end{equation}
Let $K'_k\subset {\mathbb S}^{k-1}$ be a kissing configuration satisfying
$w^\top t\leq\frac{\sqrt3}{2}$ for all $w\in K'_k$ and $t\in K_k$. In $\mathbb R^d=\mathbb R^{24}\oplus\mathbb R^k$, define the bulk,
the lifted block, and the auxiliary block by
\begin{align}
  B_d
    &=\left\{(u,0)\mid u\in\mathcal L_{24}
                     \setminus\bigcup_{i=1}^r S_i\right\},
       \label{eq:bulk}\\
   L_d
    &=\bigsqcup_{i=1}^r
      \left\{\left(\sqrt{\frac23}\,u,
                         \sqrt{\frac13}\,t\right)\mid
                  u\in S_i,\ t\in T_i\right\},
       \label{eq:lifted-block}\\
F_d
&=\{(0,w)\mid w\in K'_k\}.
  \label{eq:frame-block}
\end{align}
Then, $B_d\sqcup L_d\sqcup F_d$ is a kissing configuration of size
\begin{equation}
  196560+\sum_{i=1}^r (|T_i|-1)|S_i|+|K'_k|.
  \label{eq:leech-size}
\end{equation}
\end{lemma}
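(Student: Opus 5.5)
The plan is a direct verification: check that every vector has unit norm, then bound the inner product of each pair of distinct vectors by $\frac12$, splitting into cases by which blocks the two vectors come from. Finally, count the points.

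\emph{Unit norms.} A bulk vector $(u,0)$ and an auxiliary vector $(0,w)$ are unit vectors because $u\in\mathbb S^{23}$ and $w\in\mathbb S^{k-1}$. A lifted vector has squared norm $\frac23|u|^2+\frac13|t|^2=1$.

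\emph{Pairwise inner products.} We use two standard facts about $\mathcal L_{24}$: for distinct $u,v\in\mathcal L_{24}$ the inner product $u^\top v$ lies in $\{-1,-\frac12,-\frac14,0,\frac14,\frac12\}$, and $u^\top v=-1$ exactly when $v=-u$.
\begin{itemize}
\item[(i)] \emph{Bulk--bulk.} Two bulk vectors have inner product at most $\frac12$, since the Leech minimal vectors form a kissing configuration.
\item[(ii)] \emph{Bulk--auxiliary.} These are orthogonal.
\item[(iii)] \emph{Bulk--lifted.} The inner product is $\sqrt{2/3}\,u^\top v$ with $u\ne v$, because $v\notin\bigcup_i S_i$. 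This is at most $\sqrt{2/3}\cdot\frac12=1/\sqrt6<\frac12$.
\item[(iv)] \emph{Lifted--auxiliary.} The inner product is $\sqrt{1/3}\,t^\top w\le \sqrt{1/3}\cdot\frac{\sqrt3}{2}=\frac12$ by hypothesis.
\item[(v)] \emph{Auxiliary--auxiliary.} These are bounded by $\frac12$ because $K'_k$ is a kissing configuration.
\end{itemize}

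\emph{Lifted--lifted.} This is the main case. Take $(\sqrt{2/3}\,u,\sqrt{1/3}\,t)$ and $(\sqrt{2/3}\,v,\sqrt{1/3}\,s)$ with $u\in S_i$, $t\in T_i$, $v\in S_j$, $s\in T_j$. The inner product is $\frac23u^\top v+\frac13 t^\top s$. We treat four subcases.
\begin{itemize}
\item[(a)] $u=v$. Then $i=j$ by disjointness of the $S_i$, and $t\ne s$, so \eqref{eq:auxiliary-parts} gives $\frac23+\frac13t^\top s\le\frac23-\frac16=\frac12$.
\item[(b)] $u\ne v$ and $i=j$. Then \eqref{eq:leech-subcodes} gives $u^\top v\le\frac14$, so the inner product is at most $\frac16+\frac13=\frac12$.
\item[(c)] $i\ne j$ and $u^\top v\le\frac14$. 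The same bound $\frac16+\frac13=\frac12$ applies.
\item[(d)] $i\ne j$ and $u^\top v=\frac12$. Here $t\in T_i$ and $s\in T_j$ are distinct points of the kissing configuration $K_k$, because the parts $T_i$ are disjoint. Hence $t^\top s\le\frac12$, giving $\frac13+\frac16=\frac12$.
\end{itemize}
Subcase (d) is the step I expect to need the most care. It depends on the disjoint-union structure of $K_k$, which guarantees $t\ne s$, and on the discreteness of the Leech inner products. Without that discreteness, the gap between $\frac14$ and $\frac12$ would not be available.

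\emph{Counting.} The union is disjoint because the bulk, lifted and auxiliary vectors have different support patterns and norms in the two factors. The lifted vectors are pairwise distinct, since $(u,t)$ determines the vector. Each set $S_i$ of $|S_i|$ Leech vectors is removed and replaced by $|S_i||T_i|$ lifted vectors, and $|K'_k|$ auxiliary vectors are added. This gives the count \eqref{eq:leech-size}.
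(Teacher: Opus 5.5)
Your proof is correct. The paper states this lemma without proof, citing the original lifting papers, and the only parts it spells out (the bulk--lifted bound $\sqrt{2/3}\cdot\frac12=1/\sqrt6$ and the orthogonality of the auxiliary block to the bulk) are exactly your cases (ii) and (iii).

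One correction to your commentary: subcase (d) does not depend on the discreteness of Leech inner products. For $i\ne j$ you already have $u\ne v$ and $t\ne s$, so $\frac23u^\top v+\frac13t^\top s\le\frac23\cdot\frac12+\frac13\cdot\frac12=\frac12$ follows from the kissing property of $\mathcal L_{24}$ and $K_k$ alone, and subcases (c) and (d) merge into one.
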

PackingStar uses $496$-point sets $S_i$~\cite{PackingStar}.  The corresponding values of
$\sum_i(|T_i|-1)$ for $d=25,\ldots,31$ are
$1,4,7,16,26,48,84$, and the auxiliary block sizes are
$0,6,12,24,40,72,126$, respectively.

Note that the bulk and the system "lifted and auxiliary blocks" do not have contacts. Indeed, a bulk point and a lifted point have inner product at most
\begin{equation}
  \sqrt{\frac23}\,\frac12=\frac1{\sqrt6}<\frac12,
  \label{eq:initial-cross-slack}
\end{equation}
and the auxiliary block is orthogonal to the bulk. This is the geometric resource that our kissing arrangement construction is based on.

We keep the bulk $B_d$ fixed and rotate the entire block $M_d \eqdef L_d\sqcup F_d$, by one matrix $Q\in\operatorname{SO}(d)$, i.e.
\begin{equation}
  C_d(Q)\eqdef B_d\cup Q M_d,
  \qquad QM_d\eqdef \{Q\ell\mid \ell\in M_d\}.
  \label{eq:rotated-configuration}
\end{equation}
For nonempty finite sets $U,V\subset {\mathbb S}^{d-1}$, let us denote
\begin{equation}
  m(U,V)\eqdef\max_{u\in U,\ v\in V}u^\top v.
  \label{eq:maxdot-definition}
\end{equation}
Rotation preserves all inner products within $M_d$, so $C_d(Q)$ is
kissing precisely when
\begin{equation}
  m(B_d,QM_d)\leq\frac12.
  \label{eq:cross-feasibility}
\end{equation}
The strict initial separation in~\eqref{eq:initial-cross-slack} leaves
room for relative rotations: since all points have unit norm,
\begin{equation}
  m(B_d,Q M_d)\leq\frac1{\sqrt6}+\|Q-I\|_2.
  \label{eq:rotation-neighborhood}
\end{equation}
In particular, every $Q\in\operatorname{SO}(d)$ with
$\|Q-I\|_2\leq \frac12-\frac{1}{\sqrt6}$ is feasible.

\begin{figure}[t]
  \centering
  \includegraphics[width=\textwidth]{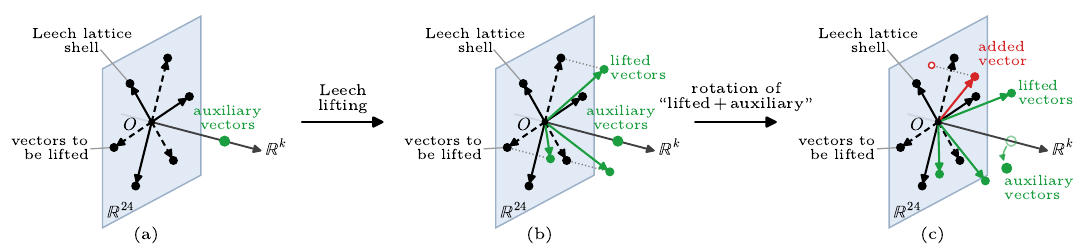}
  \caption{Three-dimensional schematic of Leech lifting and rotation. The vertical plane represents $\mathbb{R}^{24}$, while the horizontal axis, perpendicular to the plane, represents the additional factor $\mathbb{R}^{k}$. The green point initially on the axis represents the auxiliary kissing configuration. The six black vectors pointing to the vertices of a hexagon represent the Leech lattice shell, with the three dashed vectors indicating those selected for lifting. The three green vectors represent the corresponding lifted vectors. In the final panel, the lifted vectors and the auxiliary configuration are rotated together, opening a gap for the new vector shown in red.}
  \label{fig:lifting}
\end{figure}

For a feasible $Q$, a \emph{hole} is a point $x\in {\mathbb S}^{d-1}$ satisfying
$x^\top y\leq \frac12$ for every $y\in C_d(Q)$.  Such a point can be added
to the configuration.  Searching for such a hole is equivalent to solving the optimization task
\begin{equation}
  \begin{aligned}
    \mu_d^*\eqdef {}&\min_{\substack{x\in {\mathbb S}^{d-1},\ Q\in\operatorname{SO}(d)\\
                         m(B_d,QM_d)\leq1/2}} f_d(x,Q),\\[-2pt]
  \end{aligned}
  \label{eq:joint-minimax}
\end{equation}
where $f_d(x,Q)=m(B_d\sqcup QM_d,\{x\})$.
Any feasible pair with $f_d(x,Q)\leq \frac12$ gives an extension, regardless
of whether it attains this minimum.  The strict inequality provides a
margin for numerical verification. The idea of the joint rotation of lifted and auxiliary vectors is schematically presented on Figure~\ref{fig:lifting}.

\section{Experimental results}
\label{sec:experimental-results}

We use a sequential linear programming method with a trust region
to search for a feasible pair $(x,Q)$ with small $f_d(x,Q)$.
After a first hole is found, we freeze its rotation $Q$ and set
$C=C_d(Q)$. If possible, we add the antipode of $x$ to $C\cup \{x\}$. We check that no new points can be added to the latter configuration.

We evaluated the saved runs in dimensions $25$ through $31$ using the PackingStar constructions.
Table~\ref{tab:new-antipodal-results} gives the original size,
the verified size after extension, and the three inter-block maxima.
Here $A_d$ denotes the \emph{entire} selected set of added points.
Its size was $2,2,2,1$
for dimensions $26,27,28,29$, respectively. 

\begin{table}[htbp]
\centering
\small
\setlength{\tabcolsep}{4pt}
\begin{tabular}{rrrrrr}
\toprule
$d$ & Old size & New size & $m(B_d,Q_dM_d)$
    & $m(B_d,A_d)$ & $m(A_d,Q_dM_d)$ \\
\midrule
25 & 197056 & 197056 & 0.499999987678 & --- & --- \\
26 & 198550 & {\bf 198552} & 0.499999983368 & 0.485178375953 & 0.485178375957 \\
27 & 200044 & {\bf 200046} & 0.499999979377 & 0.458393985560 & 0.458457922587 \\
28 & 204520 & {\bf 204522} & 0.499999960555 & 0.461664752860 & 0.463443808987 \\
29 & 209496 & {\bf 209497} & 0.499999718198 & 0.491349037725 & 0.491349039293 \\
30 & 220440 & 220440 & 0.452940397167 & --- & --- \\
31 & 238350 & 238350 & 0.465503357966 & --- & --- \\
\bottomrule
\end{tabular}
\caption{Verified sizes and inter-block maxima.
The displayed maxima are numerical evaluations rounded to twelve decimal places.}
\label{tab:new-antipodal-results}
\begin{minipage}{0.97\textwidth}
\footnotesize
For $d=26,27,28$, $A_d=\{x_d,-x_d\}$; for $d=29$, $A_d=\{x_d\}$.
For $d=25,30,31$, $A_d=\varnothing$, so no point was added and the last
two maxima are undefined. 
\end{minipage}
\end{table}

For $d=26,27,28$, the selected set is an antipodal pair
$A_d=\{x_d,-x_d\}$, whose mutual inner product is exactly $-1$.
For $d=29$, it is the singleton $A_d=\{x_d\}$.
Unsuccessful searches for $d=25,30,31$ do not establish
nonexistence of holes in this optimization-based approach. For $d=25$, in the following section we 
describe a different way to add an antipodal pair, based on finding a suitable affine transformation $Q$.

The largest certified upper bound on a
bulk vs rotated-block maximum across these runs was less than
$0.499999983368$. Consequently the maximum
inner product between distinct points of each verified full
configuration is exactly $\frac12$.

Writing $\tau_d$ for the kissing number in dimension $d$, the four
verified extensions establish
\begin{equation}
  \tau_{26}\geq198552,\qquad
  \tau_{27}\geq200046,\qquad
  \tau_{28}\geq204522,\qquad
  \tau_{29}\geq209497.
  \label{eq:verified-lower-bounds}
\end{equation}
These improve the respective PackingStar configurations by
$2,2,2,1$ points.  The experiments do not certify global optimality of
the rotations or exhaust all possible extensions. The coordinate files of the new kissing configurations, together with the outputs of our numerical searches, are available in the accompanying GitHub repository at \url{https://github.com/k-nic/Leech_lifting}.

\section{The case of $d=25$}
\label{sec:specifics-d25}

In dimension $25$, a modification of the lifting coefficients from Lemma~\ref{leech-lifting} allows us
to add the two poles of the additional one-dimensional factor.
Take $k=1$, $r=1$, and $T_1=\{-1,1\}\subset\mathbb S^0$, and let
$S_1\subset\mathcal L_{24}$ be the $496$-point subset supplied
by PackingStar~\cite{PackingStar}, satisfying
\eqref{eq:leech-subcodes}.
Since $K_1=\mathbb S^0$, the condition on $K'_1$ forces
$K'_1=\varnothing$, and hence $F_{25}=\varnothing$.
The construction above therefore gives $196560+496=197056$ points.

Keep the bulk $B_{25}$ defined in \eqref{eq:bulk}, and introduce a
variable lifting height $0<h<1$ by setting
\begin{equation}
  \begin{aligned}
  \ell_\varepsilon(u, h)
    &=\bigl(\sqrt{1-h^2}\,u,\varepsilon h\bigr),\quad \varepsilon\in\{-1,1\},\\
  L_{25}(h)
    &=\{\ell_\varepsilon(u, h)\mid
        u\in S_1,\ \varepsilon\in\{-1,1\}\}.
  \end{aligned}
  \label{eq:d25-variable-lift}
\end{equation}
Thus $L_{25}(\frac{1}{\sqrt3})=L_{25}$ is the original lifted block from Lemma~\ref{leech-lifting}.
All these vectors have norm one. For distinct $u,v\in S_1$ and
arbitrary signs $\varepsilon,\delta\in\{-1,1\}$,
\begin{equation}
  \ell_\varepsilon(u;h)^\top\ell_\delta(v;h)
    =(1-h^2)u^\top v+\varepsilon\delta h^2
    \leq\frac14+\frac34h^2,
  \label{eq:d25-distinct-lifts}
\end{equation}
whereas the two lifts of the same vector satisfy
\begin{equation}
  \ell_+(u;h)^\top\ell_-(u;h)=1-2h^2.
  \label{eq:d25-paired-lifts}
\end{equation}
Moreover, if $z\in\mathcal L_{24}\setminus S_1$ and $u\in S_1$, then
\[
  (z,0)^\top\ell_\varepsilon(u;h)
    =\sqrt{1-h^2}\,z^\top u
    \leq\frac{\sqrt{1-h^2}}2<\frac12.
\]
Pairs within $B_{25}$ already satisfy the kissing condition.
Consequently, $B_{25}\sqcup L_{25}(h)$ is a kissing configuration
throughout the interval
\begin{equation}
  \frac12\leq h\leq\frac1{\sqrt3}.
  \label{eq:d25-height-interval}
\end{equation}
In particular, the lifted layers can be moved continuously towards the ``equator''
$({\mathbb R}^{24}\times \{0\})\cap {\mathbb S}^{24}$, while the bulk remains fixed.

Let $e_{25}=(0,1)\in\mathbb R^{24}\oplus\mathbb R$.
Both poles $\pm e_{25}$ are orthogonal to $B_{25}$, and their inner
products with the lifted vectors are $\pm h$.
Thus the poles can be adjoined when $h\leq \frac12$.
On the other hand, \eqref{eq:d25-paired-lifts} requires $h\geq \frac12$,
so the height compatible with both poles is exactly $h=\frac12$.
Define the modified lifted and auxiliary blocks by
\begin{align}
  L_{25}^{\ast}
    &=L_{25}(\frac12)
      =\left\{\left(\frac{\sqrt3}{2}u,
                          \frac{\varepsilon}{2}\right)
        \;\middle|\;u\in S_1,\ \varepsilon\in\{-1,1\}\right\},
      \label{eq:d25-modified-lifted-block}\\
  F_{25}^{\ast}
    &=\{(0,-1),(0,1)\}.
      \label{eq:d25-modified-auxiliary-block}
\end{align}
It follows that
\[
  B_{25}\sqcup L_{25}^{\ast}\sqcup F_{25}^{\ast}
\]
is a kissing configuration of size
\begin{equation}
  (196560-496)+2\cdot496+2=197058.
  \label{eq:d25-improved-size}
\end{equation}
Hence $\tau_{25}\geq197058$. The resulting arrangement is schematically drawn on Figure~\ref{fig:d25}.

\begin{figure}[t]
  \centering
  \includegraphics[width=0.8\columnwidth]{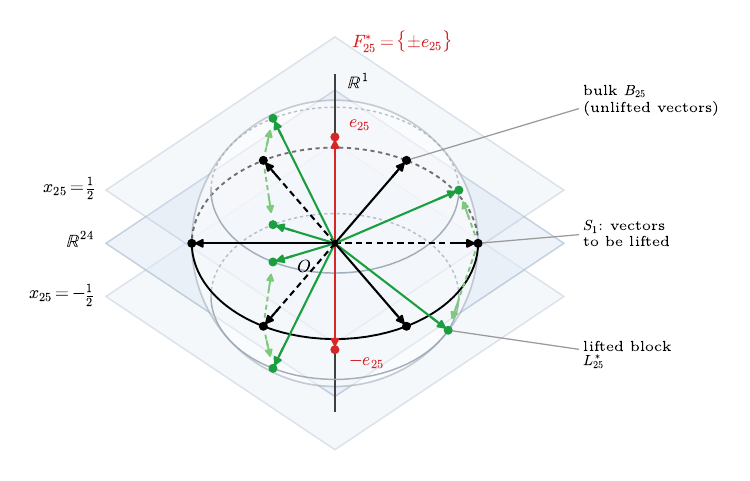}
  \caption{Three-dimensional schematic of the construction for $d=25$. The horizontal
  planes represent $\mathbb{R}^{24}$ (through the origin $O$) and the two layers
  $x_{25}=\pm\frac12$, while the vertical axis represents the additional factor
  $\mathbb{R}^{1}$; the sphere is the unit sphere, whose intersection with
  $\mathbb{R}^{24}$ is the black circle. The six black vectors pointing to the vertices
  of a hexagon inscribed in that circle represent the Leech lattice shell, with the
  three dashed vectors indicating the subset $S_{1}$ selected for lifting. Each of them
  is lifted, along the light green arrows, to the two vectors
  $\ell_{\pm}(u;\frac12)=(\frac{\sqrt3}{2}u,\pm\frac12)$ shown in green, which lie on
  the unit sphere in the layers $x_{25}=\pm\frac12$. At this height the two poles
  $\pm e_{25}$ (red) can be adjoined.}
  \label{fig:d25}
\end{figure}

\section{The case of $d=31$}
\label{sec:specifics-d31}

For $d=31$, the construction of Lemma~\ref{leech-lifting} uses $42$ disjoint
$496$-point subsets $S_i\subset\mathcal L_{24}$, with each part $T_i\subset\mathbb S^6$ an equilateral triangle. Also, $\cup_{i=1}^{42} T_i$ is isometric to the $E_7$ root
system. Thus
$|K_7|=|K'_7|=126$, and the three blocks have cardinalities
\[
 |B_{31}|=175728,\qquad |L_{31}|=62496,\qquad |F_{31}|=126.
\]
Their union is the $238350$-point configuration of~\cite{PackingStar}.

It turns out that four points can be appended to the arrangement after rotating the additional coordinates
of all lifted vectors while keeping $B_{31}$ and $F_{31}$ fixed.
For $R\in\operatorname{SO}(7)$, set
\begin{equation}
 L_{31}(R)=\bigsqcup_{i=1}^{42}
 \left\{\left(\sqrt{\frac23}\,s,\frac{Rt}{\sqrt3}\right)
       \mathrel{\Big|} s\in S_i,\ t\in T_i\right\}.
 \label{eq:d31-rotated-lifting}
\end{equation}
A common orthogonal transformation $R$ preserves all inner products
within the lifted block. It also preserves all inner products between
$B_{31}$ and the lifted block, since the bulk has zero at additional
coordinates. Consequently, the only constraints on existing vectors
that we need to satisfy are
\begin{equation}
 \frac{p^\top Rt}{\sqrt3}\leq\frac12
 \qquad(t\in K_7,\ p\in K'_7).
 \label{eq:d31-compatibility}
\end{equation}

The auxiliary configuration $K'_7$ is isometric to the $E_7$ root
system normalized to unit length. Its set $\mathcal H_7\subset\mathbb S^6$
of deepest spherical holes consists of $56$ vectors, forming $28$
antipodal pairs. 
This suggests searching for a common rotation $R$ and, for as many
holes $v\in\mathcal H_7$ as possible, a unit vector
$u_v\in\mathbb S^{23}$ such that both points
\begin{equation}
 z_v^\pm=\left(\pm\frac12u_v,\frac{\sqrt3}{2}v\right)
 \label{eq:d31-hole-pairs}
\end{equation}
can be added. Our search found a common rotation and two such unit vectors
$u_1,u_2\in\mathbb S^{23}$ for two antipodal holes $v_1=-v_2$. In the coordinates used for the
certificate, these holes are
\[
 v_1=-\left(\frac1{\sqrt3},0,0,0,0,0,\sqrt{\frac23}\right),
 \qquad v_2=-v_1.
\]
We therefore add the four points
\begin{equation}
 Z=\{z_1^+,z_1^-,z_2^+,z_2^-\},
 \qquad
 z_j^\pm=\left(\pm\frac12u_j,\frac{\sqrt3}{2}v_j\right)
 \quad(j=1,2).
 \label{eq:d31-added-points}
\end{equation}
The vectors $u_1,u_2$, the rotation matrix, and the verification data
are available in the GitHub repository accompanying this paper.
The antipodality of the two holes also gives the stronger bound
\[
 (z_1^\sigma)^\top z_2^\tau\leq\frac14-\frac34=-\frac12
 \qquad(\sigma,\tau\in\{+,-\}).
\]
The numerical bounds in Table~\ref{tab:d31-certificate} are certified
using exact integer coordinates and outward-rounded rational intervals,
including the normalization and polar-factor corrections.
The other bounds follow from the lifting construction and the
calculations above.

\begin{table}[htbp]
 \centering
 \begin{tabular}{lc}
  \hline
  Pair of blocks or points & Certified upper bound \\
  \hline
  Unchanged pairs of distinct existing vectors & $1/2$ \\
  $L_{31}(R)$, $F_{31}$ & $0.499868403188$ \\
  $B_{31}$, $Z$ & $1/2$ \\
  $L_{31}(R)$, $Z$ & $0.499383903054$ \\
  $F_{31}$, $Z$ & $1/2$ \\
  $z_j^+$, $z_j^-$, $j=1,2$ & $1/2$ \\
  $z_1^\sigma$, $z_2^\tau$, $\sigma,\tau\in\{+,-\}$ & $-1/2$ \\
  \hline
 \end{tabular}
 \caption{Inner-product bounds for the $238354$-point configuration.
 Decimal bounds are rounded upwards.}
 \label{tab:d31-certificate}
\end{table}
Consequently,
\[
 B_{31}\sqcup L_{31}(R)\sqcup F_{31}\sqcup Z
\]
is a kissing configuration of size $238350+4=238354$, proving
\begin{equation}
\tau_{31}\geq 238354.
 \label{eq:d31-new-bound}
\end{equation}

\section{The case of $d=30$}\label{sec:d30}

For $d=30$ the construction of Lemma~\ref{leech-lifting} used in~\cite{PackingStar}
takes $r=24$ pairwise disjoint $496$-point subsets
$S_1,\dots,S_{24}\subset\mathcal{L}_{24}$ satisfying~\eqref{eq:leech-subcodes}.
The configuration $K_6=T_1\sqcup\cdots\sqcup T_{24}\subset\mathbb{S}^5$ is the
normalized $E_6$ root system, partitioned into $24$ equilateral triangles, and
$K_6'$ is also isometric to the normalized $E_6$ root system with
$m(K_6',K_6)=\frac{\sqrt3}{2}$. The three blocks have cardinalities
\[
|B_{30}|=196560-24\cdot496=184656,\qquad |L_{30}|=3\cdot24\cdot496=35712,\qquad |F_{30}|=72,
\]
and their union is the $220440$-point configuration of~\cite{PackingStar}.
Unfortunately, none of the previous mechanisms applies directly. 

\subsection{The deformation}
We search over a family that combines deformations of the lifted block,
while the bulk $B_{30}$ and the auxiliary block $F_{30}$ stay fixed. For an orthogonal matrix $R\in \operatorname{O}(6)$ and a
rotation $G\in \operatorname{SO}(30)$, set
\begin{align}
L_{30}(R) &= \bigsqcup_{k=1}^{24}
  \left\{\left(\sqrt{\tfrac23}\,s,\ \tfrac{1}{\sqrt3}\,Rt\right)\ \middle|\ s\in S_{k},\ t\in T_k\right\},
  \label{eq:L30}\\
C_{30}(R,G) &= B_{30}\sqcup F_{30}\sqcup G\,L_{30}(R).
  \label{eq:C30}
\end{align}
Thus, $R$ rotates the
additional coordinates of all lifted vectors (as in Section~\ref{sec:specifics-d31}), and
$G$ rotates the whole lifted block in $\mathbb{R}^{30}$. The PackingStar
configuration is $C_{30}(I_6,I_{30})$.

For all $R\in \operatorname{O}(6)$ and $G\in \operatorname{SO}(30)$, any two
distinct vectors lying both in $B_{30}\sqcup F_{30}$ or both in
$G\,L_{30}(R)$ have inner product at most $\frac12$. Consequently,
$C_{30}(R,G)$ is a kissing configuration if and only if
\begin{equation}\label{eq:d30-constraints}
m\bigl(F_{30},\,G\,L_{30}(R)\bigr)\le\tfrac12
\quad\text{and}\quad
m\bigl(B_{30},\,G\,L_{30}(R)\bigr)\le\tfrac12 .
\end{equation}

For $R=I_6$ and $G=I_{30}$, the first condition
in~\eqref{eq:d30-constraints} holds with equality. The second condition holds
automatically when $\|G-I\|_2\le\frac12-\frac1{\sqrt6}\approx0.0918$,
by~\eqref{eq:initial-cross-slack} and~\eqref{eq:rotation-neighborhood}. The rotation we found
is larger than this, and the following lemma reduces the second condition to a
finite set of explicitly checkable pairs.

\begin{lemma}\label{lem:d30-tracked}
Suppose $\|G-I\|_2\le\frac12-\frac{\sqrt6}{12}\approx0.2959$. Then
$m\bigl(B_{30},G\,L_{30}(R)\bigr)\le\frac12$ holds if and only if
$(u,0)^\top G\ell\le\frac12$ for every $u\in\mathcal{L}_{24}\setminus\bigcup_iS_i$
and every lift $\ell=\bigl(\sqrt{\frac23}\,s,\ \frac{1}{\sqrt3}Rt\bigr)$ of a vector $s\in\bigcup_iS_i$ with $u^\top s=\frac12$.
\end{lemma}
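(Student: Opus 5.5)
The "only if" direction is immediate: the listed pairs are a subset of all pairs $(b,G\ell)$ with $b\in B_{30}$, $\ell\in L_{30}(R)$. For the "if" direction, I would show that every pair not listed has inner product strictly below $\frac12$. This follows from a perturbation bound, combined with the fact that inner products in the Leech shell take only finitely many values.

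First, I record the possible inner products. For distinct normalized minimal Leech vectors $u\neq s$, $u^\top s\in\{-1,-\frac12,-\frac14,0,\frac14,\frac12\}$. Here $u\ne s$ holds automatically, since $u\notin\bigcup_i S_i$ and $s\in\bigcup_i S_i$. So an unlisted pair has $u^\top s\le\frac14$.

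Next, I evaluate the unrotated inner product. For $b=(u,0)$ and $\ell=\bigl(\sqrt{\frac23}\,s,\frac1{\sqrt3}Rt\bigr)$, the bulk has zero in the additional coordinates, so
\[
b^\top\ell=\sqrt{\tfrac23}\,u^\top s,
\]
independently of $R$. For an unlisted pair this gives $b^\top\ell\le\sqrt{\frac23}\cdot\frac14=\frac{\sqrt6}{12}$.

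Then I apply the perturbation step, as in \eqref{eq:rotation-neighborhood}. Since $\|b\|=\|\ell\|=1$,
\[
b^\top G\ell=b^\top\ell+b^\top(G-I)\ell\le\frac{\sqrt6}{12}+\|G-I\|_2\le\frac12 .
\]
This uses the hypothesis $\|G-I\|_2\le\frac12-\frac{\sqrt6}{12}$. Hence every unlisted pair satisfies the kissing condition. Therefore $m(B_{30},GL_{30}(R))\le\frac12$ reduces to the listed pairs with $u^\top s=\frac12$.

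The only nontrivial ingredient is the gap in the Leech inner-product spectrum between $\frac14$ and $\frac12$. This is standard: for minimal vectors of norm $4$, inner products lie in $\{0,\pm1,\pm2,\pm4\}$, with $\pm3$ excluded because the difference of two such vectors would have norm $2$, and the Leech lattice has no vectors of norm $2$. After dividing by $4$, this gives exactly the set listed above. I expect this step to be the main point to justify carefully. Everything else is routine.
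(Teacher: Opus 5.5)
Your proof is correct and is the paper's argument: you split $(u,0)^\top G\ell=\sqrt{\tfrac23}\,u^\top s+(u,0)^\top(G-I)\ell$, bound the second term by $\|G-I\|_2$, and use the gap in the normalized Leech inner-product spectrum between $\frac14$ and $\frac12$ to dispose of every pair with $u^\top s\le\frac14$. Your justification of that gap (integrality, with $\pm3$ excluded because it would force a vector of norm $2$) fills in a step the paper only asserts; one small wording point is that when $\|G-I\|_2=\frac12-\frac{\sqrt6}{12}$ exactly you get $\le\frac12$ rather than the ``strictly below'' announced in your first paragraph, and $\le\frac12$ is all that is needed.
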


\begin{proof}
For $\ell=\bigl(\sqrt{\frac23}\,s,\ \frac{1}{\sqrt3}Rt\bigr)$ we have
$(u,0)^\top G\ell=\sqrt{\frac23}\,u^\top s+(u,0)^\top(G-I)\ell
\le \sqrt{\frac23}\,u^\top s+\|G-I\|_2$. Distinct vectors of $\mathcal{L}_{24}$
satisfy $u^\top s\in\{-1,-\frac12,-\frac14,0,\frac14,\frac12\}$, and for
$u^\top s\le\frac14$ the right-hand side is at most
$\frac{\sqrt6}{12}+\|G-I\|_2\le\frac12$.
\end{proof}

We search for $R,G$ and $x$, where $x$ is a hole of $C_{30}(R,G)$, i.e., a vector
$x\in\mathbb{S}^{29}$ with $m\bigl(C_{30}(R,G),\{x\}\bigr)\le\frac12$.
\subsection{Search}

We minimize $m\bigl(C_{30}(R,G),\{x\}\bigr)$ over
$(R,G,x)$ subject to~\eqref{eq:d30-constraints}. Each run starts from a rotation $R$ taken from a pool
of local minimizers of $m(K_6',RK_6)$, whose smallest value is $0.7848<\frac{\sqrt3}{2}$.
Such an $R$ releases all contacts between the lifted and auxiliary blocks.
We used a sequential linear programming
method with a trust region, as in Section~\ref{sec:experimental-results}.

The final solution has a rotation $G$ with
$\|G-I\|_2\approx  0.2598$. The resulting hole $x$ satisfies
$m\bigl(C_{30}(R,G),\{x\}\bigr)\approx  0.48732$. Adding the antipode of $x$ does not violate kissing constraints.
The inner-product bounds between the blocks are listed in
Table~\ref{tab:d30}.

\begin{table}[t]
\centering
\caption{Inner-product bounds for the $220442$-point configuration
$B_{30}\sqcup F_{30}\sqcup G\,L_{30}(R)\sqcup\{x,-x\}$.
Decimal bounds are rounded upwards.}
\label{tab:d30}
\begin{tabular}{lc}
\hline
Pair of blocks or points & Upper bound\\
\hline
Distinct vectors within $B_{30}\sqcup F_{30}$ or within $G\,L_{30}(R)$ & $1/2$\\
$B_{30}$, $G\,L_{30}(R)$, pairs with $u^\top s=\frac12$ & $0.499999797$\\ 
$B_{30}$, $G\,L_{30}(R)$, pairs with $u^\top s\le\frac14$ & $0.463920$\\ 
$F_{30}$, $G\,L_{30}(R)$ & $0.499999901$\\ 
$\{x,-x\}$, $C_{30}(R,G)$ & $0.487319117$\\ 
\hline
\end{tabular}
\end{table}

Consequently,
$B_{30}\sqcup F_{30}\sqcup G\,L_{30}(R)\sqcup\{x,-x\}$
is a kissing configuration of size $220440+2=220442$, proving
\begin{equation}\label{eq:tau30}
\tau_{30}\ge 220442 .
\end{equation}

The certificate consists of the vector $x$ and the matrices $R\in \operatorname{O}(6)$ and
$G\in \operatorname{SO}(30)$. The files containing $x$ and the resulting matrices $R$
and $G$ are available in the accompanying GitHub repository.

\bibliographystyle{IEEEtran}
\bibliography{lit}
\end{document}